\newtheorem{definition}{Definition}
\newtheorem{theorem}{Theorem}
\newtheorem{corollary}[theorem]{Corollary}
\newtheorem{lemma}[theorem]{Lemma}
\newcommand{\dist}{\text{dist}}
\newcommand{\Althofer}{Alth\"{o}fer}
\newcommand{\Erdos}{Erd\"{o}s}
\title{A Trivial Yet Optimal Solution to Vertex Fault Tolerant Spanners}
\author{Greg Bodwin\thanks{gbodwin6@cc.gatech.edu}\ }
\author{Shyamal Patel\thanks{shyamalpatelb@gmail.com}}
\affil{Georgia Tech}
\date{}
\begin{document}

\maketitle

\begin{abstract}
We give a short and easy upper bound on the worst-case size of fault tolerant spanners, which improves on all prior work and is fully optimal at least in the setting of vertex faults.
\end{abstract}

\section{Introduction}

This paper concerns \emph{spanners}, a fundamental primitive used in graph sketching.

\begin{definition} [Spanners]
A \emph{$k$-spanner} of a graph $G = (V, E, w)$ is a subgraph $H = (V, E' \subseteq E, w)$ for which
$\dist_H(s, t) \le k \cdot \dist_G(s, t)$ for all $s, t \in V$.
\end{definition}
\noindent Spanners have been intensively studied since the mid-80s \cite{PU89jacm, PU89sicomp, ADDJS93, ACIM99, DHZ00, EP04, TZ06, BKMP10, Chechik13soda, AB16stoc, FS16, ABP17}.
A staple in the literature is the \emph{greedy construction algorithm} of \Althofer{} et al \cite{ADDJS93}, which works as follows: initialize $H$ to be an empty graph, and then for each edge $(u, v)$ in the input graph $G$ in order of increasing weight, add $(u, v)$ to $H$ if currently $\dist_H(u, v) > k w_{(u, v)}$.
Besides its simplicity and obvious correctness, the greedy algorithm is popular because there is a simple proof that it is \emph{existentially optimal} \cite{ADDJS93, FS16}, meaning that the number of edges in $H$ never exceeds the worst-case number of edges needed for a $k$-spanner over all possible input graphs on as many nodes as $G$.

In practice, spanners are often applied to systems whose parts are prone to sporadic failures.
A spanner for such a system must be robust to these failures, giving rise to the notion of \emph{fault tolerance}:
\begin{definition} [Fault Tolerant Spanners]
Given a graph $G$, a subgraph $H$ is an \emph{$f$ Vertex Fault Tolerant (VFT), resp.\ Edge Fault Tolerant (EFT),} $k$-spanner of $G$ if for any set of $F$ vertices (resp.\ edges) in $G$ of size $|F| \le f$, $H \setminus F$ is a $k$-spanner of $G \setminus F$.
\end{definition}
\noindent To construct FT spanners, it is natural to consider the obvious adaptation of the greedy algorithm:

\begin{algorithm}
	\caption{The VFT (EFT) Greedy Algorithm}\label{greedy}
	\begin{algorithmic}
	   \Function{ft-greedy}{$G = (V, E, w), k, f$}
	   \State{$H \gets (V, \emptyset, w)$}
	   \For{ $(u,v) \in E$ in order of increasing weight}
	        \If{there exists a set $F$ of $|F| \le f$ vertices (edges) such that $\dist_{H \setminus F}(u, v) > k \cdot w(u,v)$}
	            \State{add $(u, v)$ to $H$}
	        \EndIf
	   \EndFor
       \State \Return{$H$}
       \EndFunction

	\end{algorithmic}
\end{algorithm}

Correctness is again obvious, but unfortunately the analyses used in the non-faulty setting all seem to break for the FT greedy algorithm.
Most prior work on FT spanners has thus abandoned the greedy approach in favor of more involved constructions \cite{LNS98, CZ04, CLPR10, DK11podc, AFIR10} (see also \cite{Parter15, PP13, PP18, BGPV17, CCFK17, DTCR08, BK08, GV12, WY13, DZ16, BGLP16esa}); an analysis of the FT greedy algorithm was only obtained recently via fairly complex arguments \cite{BDPV18}.
In contrast, our main result is a simple analysis of the FT greedy algorithm which improves on all of these previous bounds.

\begin{theorem} [Main Result] \label{thm:main}
Let $b(n,k)$ be the maximum possible number of edges in a graph on $n$ nodes and girth $>k$.
Then any graph $H$ on $n$ nodes returned by the VFT or EFT Greedy Algorithm with parameters $f, k$ satisfies
$$\left|E(H)\right| = O \left (f^2 \cdot b \left (\frac{n}{f}, k+1 \right) \right).$$
\end{theorem}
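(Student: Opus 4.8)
The plan is to reduce the fault-tolerant bound to the classical high-girth bound by random vertex sampling. Recall why ordinary greedy works: an edge $e=(u,v)$ is inserted only when no $u$-$v$ path of weight $\le k\,w(u,v)$ exists among the lighter edges already present, so $e$ cannot close a short combinatorial cycle, the output has girth $>k+1$, and hence has at most $b(n,k+1)$ edges. In the FT setting this argument fails \emph{only} because the blocking witness $F_e$ can be routed around by other short paths. The idea is that if we pass to a random induced subgraph in which the small sets $F_e$ tend to disappear, the classical obstruction reappears locally, and the sampling loss translates exactly into the $f^2$ and $n/f$ factors.

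First I would isolate a purely combinatorial decomposition that needs no fault tolerance. For any weighted graph $G'$, call an edge $e$ \emph{closing} if its endpoints are at combinatorial distance $\le k$ using only edges of weight $\le w(e)$. Deleting all closing edges leaves a graph of girth $>k+1$: in any cycle of length $\le k+1$, its heaviest edge would be closing and hence already removed. Therefore
$$\left|E(G')\right| \;\le\; b\bigl(|V(G')|,\,k+1\bigr) \;+\; \#\{\text{closing edges of }G'\}.$$
I will apply this to a sampled subgraph of $H$.

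The core step is the sampling and charging argument. Put each vertex of $V$ into $S$ independently with probability $p=\tfrac{1}{2f}$, and consider $H[S]$; note $\mathbb{E}\,|E(H[S])| = p^2\,|E(H)|$. The key claim is that an edge $e=(u,v)$ of $H$ can be closing \emph{in $H[S]$} only if $u,v\in S$ \emph{and} $S\cap F_e\ne\emptyset$. Indeed, a short $u$-$v$ path realizing the closing within $H[S]$ uses only vertices of $S$ and only edges present when $e$ was added (all of weight $\le w(e)$); if it avoided $F_e$ it would certify $\dist_{H\setminus F_e}(u,v)\le k\,w(e)$, contradicting the greedy witness at the time $e$ was added. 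Since $F_e$ has size $\le f$ and is disjoint from $\{u,v\}$, the two events are independent and $\Pr[e\text{ is closing in }H[S]]\le p^2\cdot fp$. Summing, the expected number of closing edges satisfies $\mathbb{E}[X_S]\le fp^3\,|E(H)|$. Combining with the decomposition and $\mathbb{E}\,|E(H[S])|=p^2|E(H)|$ gives
$$p^2\,|E(H)| \;\le\; \mathbb{E}\bigl[b(|S|,k+1)\bigr] \;+\; fp^3\,|E(H)|,$$
and since $fp=\tfrac12$ the last term absorbs at most half the left side, yielding $|E(H)| \le 8f^2\,\mathbb{E}[b(|S|,k+1)] = O\!\left(f^2\,b(n/f,k+1)\right)$.

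The main obstacle is precisely this charging claim: seeing that a \emph{surviving} short-cycle-closing edge forces its whole fault witness $F_e$ into the sample, which converts fault tolerance into a small hitting probability $\le fp$. Everything else is bookkeeping, with two spots needing a little care. To pass from $\mathbb{E}[b(|S|,k+1)]$ to $b(n/f,k+1)$ I would use that $b(\cdot,k+1)$ is monotone and polynomially bounded together with the concentration of $|S|$ around $pn=n/(2f)$; alternatively, sampling a subset of fixed size $\lceil n/(2f)\rceil$ (using negative association in place of independence) makes $b(|S|,k+1)$ deterministic and removes this issue entirely. For the EFT case the identical argument applies: a blocking path within $H[S]$ that avoids the fault edges cannot exist, so it must traverse an endpoint of some edge of $F_e$; there are at most $2f$ such vertices in $S$, so $\Pr[e\text{ closing}]\le p^2\cdot 2fp$ and the same bound follows up to the constant.
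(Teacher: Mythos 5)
Your proposal is correct and follows essentially the same route as the paper: your fault witnesses $F_e$ are exactly the paper's $(k{+}1)$-blocking set (Lemma \ref{lem:blocks}), and your expected count of ``closing'' edges in a rate-$1/(2f)$ vertex sample is the same computation as Lemma \ref{lem:subgraph}, which uses the fixed-size sampling variant you suggest to sidestep the $\mathbb{E}[b(|S|,k+1)]$ issue. The one detail to tighten is tie-breaking: define ``closing'' with respect to the greedy processing order rather than edge weight, so that the short path witnessing closure really consists of edges already present in $H$ when $e$ was considered.
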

This upper bound is best possible in the VFT setting, for any construction algorithm, due to a simple lower bound construction in \cite{BDPV18} (meaning that any asymptotic tradeoff between $n, f, k, |E(H)|$ not promised by this theorem does not exist in general).
In the EFT setting, the bound of Theorem \ref{thm:main} was already known to be the best possible tradeoff when $k < 5$ \cite{BDPV18}, but for larger $k$ it is still conceivable to improve the upper bound as far as
$$\left|E(H)\right| \overset{?}{=} O\left(f \cdot b \left (\frac{n}{\sqrt{f}}, k+1\right) + nf\right).$$
It remains a major open question to asymptotically determine $b(n, k)$; the only known upper bound is the folklore \emph{Moore bounds} which state $b(n, k) = O\left(n^{1 + 1/\lfloor k/2 \rfloor}\right)$.
Plugging this into Theorem \ref{thm:main} yields:
\begin{corollary}
For any graph $H$ on $n$ nodes returned by the VFT or EFT Greedy Algorithm with parameters $f, 2k-1$, we have
$$\left|E(H)\right| = O \left( n^{1 + 1/k} f^{1 - 1/k} \right).$$
\end{corollary}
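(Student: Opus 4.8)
The plan is to prove Theorem~\ref{thm:main} directly, since the Corollary follows from it by a one-line substitution. Concretely, I would set $k' = 2k-1$ in the theorem to match the Corollary's parameter, so that the relevant quantity is $b(n/f, k'+1) = b(n/f, 2k)$, and then invoke the Moore bound $b(m, 2k) = O\!\left(m^{1+1/\lfloor 2k/2\rfloor}\right) = O\!\left(m^{1+1/k}\right)$. The only mild subtlety is the floor: since $k'+1 = 2k$ is even, $\lfloor (2k)/2 \rfloor = k$ exactly, so the Moore exponent is cleanly $1 + 1/k$ with no rounding loss.

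\begin{proof}
Apply Theorem~\ref{thm:main} with girth parameter $2k-1$, so that the girth bound inside $b$ becomes $(2k-1)+1 = 2k$. This gives
\[
|E(H)| = O\!\left(f^2 \cdot b\!\left(\frac{n}{f},\, 2k\right)\right).
\]
By the Moore bound with $m = n/f$ and even argument $2k$, we have $\lfloor 2k/2 \rfloor = k$, hence
\[
b\!\left(\frac{n}{f},\, 2k\right) = O\!\left(\left(\frac{n}{f}\right)^{1 + 1/k}\right).
\]
Substituting this into the previous display yields
\[
|E(H)| = O\!\left(f^2 \cdot \left(\frac{n}{f}\right)^{1 + 1/k}\right)
       = O\!\left(f^2 \cdot \frac{n^{1+1/k}}{f^{1+1/k}}\right)
       = O\!\left(n^{1+1/k}\, f^{\,2 - 1 - 1/k}\right)
       = O\!\left(n^{1+1/k}\, f^{\,1 - 1/k}\right),
\]
as claimed.
\end{proof}

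The proof is entirely routine once Theorem~\ref{thm:main} is in hand, so there is no genuine obstacle here; the derivation is a mechanical substitution of the Moore bound followed by collecting exponents of $f$. The one point worth stating explicitly for the reader is the exponent arithmetic $2 - (1 + 1/k) = 1 - 1/k$, which confirms that the $f^2$ prefactor is partially absorbed by the $f^{-(1+1/k)}$ coming from the $n/f$ scaling, leaving the advertised $f^{1-1/k}$ dependence. All the real work is deferred to the (separately proved) main theorem and to the folklore Moore bound, both of which we are entitled to assume.
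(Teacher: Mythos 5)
Your proposal is correct and is exactly the paper's route: the paper obtains this corollary by ``plugging'' the Moore bound $b(n,k) = O\bigl(n^{1+1/\lfloor k/2\rfloor}\bigr)$ into Theorem~\ref{thm:main} with parameter $2k-1$, and your exponent arithmetic $2-(1+1/k)=1-1/k$ matches. Nothing is missing.
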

\noindent This corollary improves over the previous best upper bound in \cite{BDPV18} by a factor of $\exp(k)$.
The famous \emph{\Erdos{} girth conjecture} \cite{girth} posits that the Moore Bounds are tight, which would then imply that this corollary is best possible, at least for VFT spanners.

An open question left by this work is to improve the runtime of the FT greedy algorithm: in a naive implementation it is exponential in $f$.
It would be interesting to improve this dependence, or perhaps to find a different fast algorithm achieving the existential size bounds proved in this paper.
We note that \cite{DK11podc} gives a construction with polynomial runtime dependence on $f$, at the price of somewhat suboptimal spanner size.

\section{Proof of Main Result}

We will state the proof in the VFT setting here; the proof in the EFT setting is essentially identical.

\begin{definition} [Blocking Set] Given a graph $G = (V, E)$, a \emph{$k$-blocking set} for $G$ is a set $B \subseteq V \times E$ such that (1) every $(v,e) \in B$ has $v \not \in e$, and (2) for every cycle $C$ in $G$ on $\le k$ edges, there exists $(v,e) \in B$ such that $v,e \in C$.
\end{definition}

\begin{lemma} \label{lem:blocks}
Any graph $H$ returned by the VFT greedy algorithm with parameters $k, f$ has a $(k+1)$-blocking set of size at most $f|E(H)|$.
\end{lemma}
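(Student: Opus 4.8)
The plan is to build the blocking set directly from the witnesses produced by the greedy algorithm. Recall that whenever the greedy algorithm adds an edge $e = (u,v)$ to $H$, it does so because there is a fault set $F_e$ with $|F_e| \le f$ certifying $\dist_{H_e \setminus F_e}(u,v) > k \cdot w(e)$, where $H_e$ denotes the state of $H$ immediately before $e$ is processed. I would first observe that we may assume $u, v \notin F_e$: if an endpoint of $e$ were faulted then $\dist_{H_e \setminus F_e}(u,v)$ would be vacuously infinite, so such a fault set carries no information and can be discarded in favor of one avoiding the endpoints. With this in hand I would define
$$B \ :=\ \bigl\{\, (x, e) \ :\ e \in E(H),\ x \in F_e \,\bigr\}.$$
Since each $F_e$ has size at most $f$ and there are $|E(H)|$ edges, immediately $|B| \le f \cdot |E(H)|$. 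Condition (1) of a blocking set, namely $x \notin e$, holds because we arranged $F_e$ to exclude both endpoints of $e$.

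The substance of the proof is verifying condition (2): every cycle $C$ in $H$ on at most $k+1$ edges is blocked by some pair in $B$. I would fix such a cycle $C$ and let $e = (u,v)$ be the edge of $C$ processed \emph{last} by the greedy algorithm. Deleting $e$ from $C$ leaves a $u$--$v$ path $P$ with at most $k$ edges, and I would argue that all of $P$ already lies in $H_e$: every edge of $P$ is an edge of $H$ (as $C \subseteq H$) and was processed before $e$, hence was present when $e$ was considered. Because the greedy algorithm handles edges in nondecreasing weight order, each edge of $P$ has weight at most $w(e)$, so the total weight of $P$ is at most $k \cdot w(e)$.

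The blocking pair now falls out by contradiction. If $F_e$ contained no internal vertex of $P$, then since $u, v \notin F_e$ the entire path $P$ would survive in $H_e \setminus F_e$, yielding $\dist_{H_e \setminus F_e}(u,v) \le k \cdot w(e)$ and contradicting the certificate for $e$. Hence $F_e$ contains some internal vertex $x$ of $P$. That vertex satisfies $x \in C$ (as a vertex), $e \in C$ (as an edge), and $x \notin e$ since $x$ is internal to $P$; thus $(x, e) \in B$ blocks $C$, completing condition (2). The one point demanding care — and the step I expect to be the main obstacle — is the bookkeeping around the edge ordering: one must pin down that $e$ being processed last among $C$'s edges guarantees both that the remaining path $P$ is fully present in $H_e$ and that every edge of $P$ has weight at most $w(e)$, so that $P$ genuinely realizes distance at most $k \cdot w(e)$. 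A consistent tie-breaking convention in the weight ordering makes this precise.
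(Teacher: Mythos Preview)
Your proposal is correct and follows essentially the same approach as the paper: define $B$ as the set of pairs $(x,e)$ with $x\in F_e$, bound $|B|$ by $f|E(H)|$, and for each short cycle $C$ take its last-added edge $e$ and argue that the remaining path forces some vertex of $C\setminus\{u,v\}$ into $F_e$. Your write-up is in fact more careful than the paper's on two points the paper leaves implicit: you explicitly justify condition~(1) by arranging $F_e$ to avoid the endpoints of $e$, and you spell out why the nondecreasing weight order guarantees the path $P$ has total weight at most $k\cdot w(e)$.
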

\begin{proof}
For each edge $e = (u,v) \in H$, let $F_e$ be the set of nodes such that $\dist_{H \setminus F_e}(u, v) > kw_e$ when $e$ is added to $H$.
Let
$$B := \left\{ \left(x, e \right) \ \mid \ e \in E(H), x \in F_e \right\};$$
since $|F_e| \le f$ for all $e$, we have $|B| \le f|E(H)|$.
We now show that $B$ is a $(k+1)$-blocking set.
Let $C$ be any cycle on $\le k+1$ edges in the final graph $H$ and let $e = (u,v)$ be the last edge in $C$ considered by the greedy algorithm.
By construction there is a $u \leadsto v$ path (through $C$) of total weight $\le k w_{(u,v)}$ when $e$ is added to $H$, and so some node $x \in C \setminus \{u, v\}$ must be included in $F_e$.
Thus $(x, e) \in B$.
\end{proof}

\begin{lemma} \label{lem:subgraph}
Let $H$ be any graph on $n$ nodes and $m$ edges, let $f = o(n)$ be a parameter, and suppose $H$ has a $(k+1)$-blocking set $B$ of size $|B| \le f |E(H)|$.
Then $H$ has a subgraph on $O(n/f)$ nodes, $\Omega(m/f^2)$ edges, and girth $> k+1$.
\end{lemma}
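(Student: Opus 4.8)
The plan is to extract a dense, high-girth subgraph from $H$ by randomly sampling vertices, arguing that the sampling both reduces the vertex count to $O(n/f)$ and destroys all the short cycles recorded in the blocking set $B$. The key insight is that the blocking set guarantees that *every* cycle of length $\le k+1$ in $H$ contains a pair $(v,e) \in B$ with $v$ a vertex and $e$ an edge both lying on that cycle. So if I can delete vertices in a way that, for each such pair, "breaks" it — meaning I either remove the vertex $v$ or remove an endpoint of the edge $e$ — then the surviving subgraph will have girth $> k+1$, since no short cycle can survive intact.

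First I would set up a random vertex-sampling process: include each vertex of $H$ independently with some probability $p$, and let $S$ be the sampled set, with $H[S]$ the induced subgraph. Choosing $p = \Theta(1/f)$ immediately gives $\mathbb{E}[|S|] = pn = \Theta(n/f)$, matching the target $O(n/f)$ on the vertex side. The more delicate point is the edge count: an edge survives in $H[S]$ iff both its endpoints are sampled, which happens with probability $p^2 = \Theta(1/f^2)$, so $\mathbb{E}[|E(H[S])|] = p^2 m = \Theta(m/f^2)$, matching the target $\Omega(m/f^2)$. This is exactly where the $f^2$ in the statement comes from, and it aligns with the $f^2$ factor in the main theorem — so the quadratic loss is built into the sampling and is expected rather than a defect.

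The heart of the argument is controlling the girth, and **the main obstacle** is that naive independent sampling does not kill short cycles for free: a cycle of length $\le k+1$ survives in $H[S]$ iff all its vertices are sampled, which happens with probability $\approx p^{k+1}$, and there could be many such cycles, so a union bound does not obviously drive the expected number of surviving short cycles below the expected number of surviving edges. The clean fix is to use the blocking set rather than reasoning about cycles directly. For each surviving short cycle $C$, the blocking set provides a witness pair $(v,e)\in B$ with $v,e$ on $C$; the cycle can survive the sampling only if $v$ and both endpoints of $e$ are all sampled, a joint event of probability $p^3 = \Theta(1/f^3)$. Summing over all $|B| \le fm$ pairs, the expected number of surviving blocking pairs — and hence an upper bound on the expected number of surviving short cycles — is $O(fm \cdot p^3) = O(m/f^2)$.

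I would then combine these bounds via an alteration/expectation argument: in expectation the subgraph $H[S]$ has $\Omega(m/f^2)$ edges but only $O(m/f^2)$ surviving short cycles, and by tuning the constant in $p$ I can ensure the expected edge count strictly exceeds a constant multiple of the expected number of surviving witness pairs. Deleting one vertex from each surviving short cycle then removes all remaining short cycles while sacrificing at most one edge per deletion, leaving a subgraph that still has $\Omega(m/f^2)$ edges, $O(n/f)$ vertices, and girth $> k+1$. A final routine step is to pass from expectations to an existential guarantee — either by fixing an outcome that simultaneously meets all three targets (using linearity of expectation and Markov-type concentration on each quantity), or by the standard deletion-method phrasing where the expected value of (edges minus surviving witnesses) is already $\Omega(m/f^2)$. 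The condition $f = o(n)$ ensures $p = \Theta(1/f)$ is a valid probability and that the asymptotic bounds are nontrivial.
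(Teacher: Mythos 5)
Your overall strategy is the same as the paper's: sample vertices at rate $\Theta(1/f)$, observe that an edge survives with probability $\Theta(1/f^2)$ while a blocking pair $(v,e)$ survives only with probability $\Theta(1/f^3)$, and use $|B| \le fm$ plus a constant-tuning/alteration argument to leave $\Omega(m/f^2)$ edges. However, your alteration step as written contains a genuine error. You claim that the number of surviving blocking pairs upper-bounds the number of surviving short cycles, and that ``deleting one vertex from each surviving short cycle'' sacrifices ``at most one edge per deletion.'' Both subclaims are false: many distinct $(\le k+1)$-cycles can share a single witness pair $(v,e) \in B$ (the definition only guarantees each short cycle contains \emph{some} pair, not a distinct one), so surviving cycles can vastly outnumber surviving pairs; and deleting a vertex removes all edges incident to it, not one. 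The correct move --- which is the one the paper makes, and which your own witness-counting already supports --- is to delete the \emph{edge} $e$ of each surviving pair $(v,e)$. This costs exactly one edge per surviving pair, so the loss is bounded by the $O(m/f^2)$ expected count you computed, and it destroys \emph{every} surviving short cycle, since any such cycle contains some fully-surviving pair $(v,e)$ and hence loses $e$. With that substitution your expectation bound $p^2 m - f m p^3 = \Omega(m/f^2)$ for $p = c/f$, $c<1$, goes through exactly as in the paper.

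One smaller point: because you sample vertices independently, $|S|$ is random, and ``fixing an outcome that simultaneously meets all three targets'' needs an actual argument (the outcome maximizing edges-minus-witnesses could a priori have too many vertices, and a naive Chernoff conditioning on $|S| \le 2pn$ requires the failure probability to be $o(1/f^2)$, which is not automatic for all $f = o(n)$). The paper sidesteps this entirely by sampling a uniformly random subset of \emph{exactly} $\lceil n/(2f)\rceil$ vertices, making the vertex count deterministic; you should do the same or supply the missing conditioning argument.
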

\begin{proof}

Let $H'$ be an induced subgraph of $H$ on a uniformly random subset of exactly $\lceil n/(2f) \rceil$ vertices, let $B' := B \cap \left(E(H') \times V(H')\right)$, and let $H''$ be obtained from $H'$ by deleting every edge contained in any pair in $B'$.
The graph $H''$ has girth $> k+1$, since by definition of blocking sets we have now deleted at least one node or edge from every cycle in $H$ on $\le k+1$ edges.
Additionally:
\begin{itemize}
\item Each edge in $E(H)$ survives in $E(H')$ iff both of its endpoints survive in $V(H')$, which happens with probability 
$$\frac{\left\lceil n/(2f) \right\rceil}{n} \cdot \frac{\left\lceil n/(2f) \right\rceil - 1}{n - 1} = (1+o(1)) \frac{1}{4f^2}.$$

\item Each pair $(x, (u, v)) \in B$ survives in $B'$ iff all of $x, u, v$ survive in $V(H')$, which happens with probability
$$\frac{\left\lceil n/(2f) \right\rceil}{n} \cdot \frac{\left\lceil n/(2f) \right\rceil - 1}{n - 1} \cdot \frac{\left\lceil n/(2f) \right\rceil - 2}{n - 2} = (1+o(1)) \frac{1}{8f^3} $$
\end{itemize}
We may now compute
$$\mathbb{E}\left[|E(H'')|\right] \ge \mathbb{E}\left[|E(H')| - |B'|\right] = (1+o(1)) \left( \frac{|E(H)|}{4f^2} - \frac{|B|}{8f^3} \right) \ge (1+o(1)) \left(\frac{m}{4f^2} - \frac{m}{8f^2} \right) = \Omega\left(\frac{m}{f^2}\right).$$
There exists a possible setting of $H''$ which matches or exceeds this expectation, which thus has $\Omega(m/f^2)$ edges and satisfies the lemma.
\end{proof}

\begin{proof} [Proof of Theorem \ref{thm:main}]
If $f = \Omega(n)$, then Theorem \ref{thm:main} holds trivially since it states $|E(H)| = O(n^2)$.
Otherwise, let $H$ be an output graph of the FT Greedy algorithm on $n$ edges and $m$ nodes.
By Lemmas \ref{lem:blocks} and \ref{lem:subgraph}, $H$ has a subgraph of girth $>k+1$ on $O(n/f)$ nodes and $\Omega(m/f^2)$ edges, and hence
\begin{align*}
b(O(n/f), k+1) &= \Omega\left(\frac{m}{f^2}\right)\\
m &= O\left(f^2 \cdot b(n/f, k+1)\right). \qedhere
\end{align*}
\end{proof}

We conclude by remarking on a limitation of our approach.
Our definition of blocking sets is very VFT-centric; since a gap remains in the EFT setting, it is tempting to try to adapt this definition to the EFT setting in search of better upper bounds.
In particular, let us say that an \emph{edge $k$-blocking set} is a set of distinct edge pairs such that every $(\le k)$-cycle $C$ has $e_1, e_2 \in C$ for some $(e_1, e_2) \in B$.
It is easy to show that any graph $H$ returned by the EFT greedy algorithm with parameters $f, k$ admits an edge $(k+1)$-blocking set of size $\le f|E(H)|$ (the analog of Lemma \ref{lem:blocks}).
We would then need an \emph{improved} analog of Lemma \ref{lem:subgraph} in order to get improved upper bounds on EFT spanners for $k \ge 5$.
However, no such improvement is possible: for any $k$ we can show a graph $H$ on $\Omega(f^2 \cdot b(n/f, k+1))$ edges that has an edge $(k+1)$-blocking set of size $\le f|E(H)|$, and so our analog of Lemma \ref{lem:blocks} alone is not powerful enough to get improved upper bounds in the EFT setting.
Specifically, this $H$ is the same as the VFT lower bound graph of \cite{BDPV18}: it is the Cartesian product of an arbitrary graph of girth $> k+1$ with a biclique on $\lfloor f/2 \rfloor$ nodes; the blocking set is then all pairs of edges that share an endpoint in the product graph and which correspond to the same edge in the initial high-girth graph.
Hence any improvement to our EFT upper bounds (if possible) will need to exploit stronger properties of $H$ than just the existence of a small edge blocking set.

\bibliographystyle{acm}
\bibliography{References}

\end{document}